\newif\ifreport\reporttrue
\theoremstyle{definition}
\newtheorem{definition}{Definition}
\newtheorem{theorem}{Theorem}
\newtheorem{lemma}[theorem]{Lemma}
\newtheorem{corollary}[theorem]{Corollary}
\begin{document}

\title{Optimizing Data Freshness, Throughput, and Delay in Multi-Server Information-Update Systems}

\title{Optimizing Data Freshness, Throughput, and Delay in Multi-Server Information-Update Systems}
\author{\large Ahmed M. Bedewy$^\dagger$, Yin Sun$^\dagger$, and Ness B. Shroff$^\dagger$$^\ddagger$ \\ [.1in]
\large  \begin{tabular}{c} $^\dagger$Dept. of ECE, $^\ddagger$Dept. of CSE, The Ohio State University, Columbus, OH. \\
\end{tabular} }

\maketitle

\begin{abstract}
In this work, we investigate the design of information-update systems, where incoming update packets are forwarded to a remote destination through multiple servers (each server can be viewed as a wireless channel). One important performance metric of these systems is the \emph{age-of-information} or simply \emph{age}, which is defined as the time elapsed since the freshest packet at the destination was generated. Recent studies on information-update systems have shown that the age-of-information can be reduced by intelligently dropping stale packets. However, packet dropping may not be appropriate in many applications, such as news and social updates, where users are interested in not just the latest updates, but also past news. Therefore, all packets may need to be successfully delivered. In this paper, we study how to optimize age-of-information without throughput loss. We consider a general scenario where incoming update packets do not necessarily arrive in the order of their generation times. We prove that a preemptive Last Generated First Served (LGFS) policy simultaneous optimizes the age, throughput, and delay performance in infinite buffer queueing systems. We also show age-optimality for the LGFS policy for any finite queue size. These results hold for arbitrary, including non-stationary, arrival processes. To the best of our knowledge, this paper presents the first optimal result on minimizing the age-of-information in communication networks with an external arrival process of information update packets.  
\end{abstract}
\section{Introduction}\label{Int}
The ubiquity of mobile devices and applications, has increased the demand for real-time information updates, such as news, weather reports, email notifications, stock quotes, social updates, mobile ads, etc. Also, in network-based monitoring and control systems, timely status updates are crucial. These include, but are not limited to, sensor networks used in temperature or other physical phenomenon, and autonomous vehicle systems.

A common objective in these applications is to keep the destination updated with the latest information. To identify the timeliness of the updates, a metric called \emph{age of information}, or simply \emph{age}, was defined in \cite{adelberg1995applying,cho2000synchronizing,golab2009scheduling,KaulYatesGruteser-Infocom2012}. At time $t$, if $U(t)$ is the time when the freshest update at the destination was generated, age $\Delta(t)$  is $\Delta(t)=t-U(t)$. Hence, age is the time elapsed since the freshest packet was generated.

There have been several recent works on characterizing the time-average age of different information-update policies under Poisson arrival process, and finding policies with a small time-average age 
\cite{KaulYatesGruteser-Infocom2012,2012ISIT-YatesKaul,2015ISITHuangModiano,CostaCodreanuEphremides2014ISIT,Icc2015Pappas,
2012CISS-KaulYatesGruteser,KamKompellaEphremides2014ISIT}. In \cite{KaulYatesGruteser-Infocom2012,2012ISIT-YatesKaul,2015ISITHuangModiano}, the update generation rate was optimized to improve data freshness in First-Come First-Served (FCFS) information-update systems. To improve the age, these studies also reduced the update generation rate, which in turn sacrificed the system throughput. In \cite{CostaCodreanuEphremides2014ISIT,Icc2015Pappas}, it was found that the age can be improved by discarding old packets waiting in the queue if a new sample arrives. This can greatly reduce the impact of queueing delay on data freshness. However, many applications may not want to discard packets, e.g., where the users are interested in not just the latest updates, but also past news, in which case all packets must be successfully delivered. In \cite{2012CISS-KaulYatesGruteser,KamKompellaEphremides2014ISIT}, the time-average age was characterized for Last-Come First-Served (LCFS) information-update systems with and without preemption; and FCFS with two servers under a Poisson arrival process. Applications of information updates in  channel information feedback and sensor networks were considered in \cite{ge_info_channel_info_icc2015,BacinogCeranUysal_Biyikoglu2015ITA,2015ISITYates}. 


\begin{figure}
\includegraphics[scale=0.3]{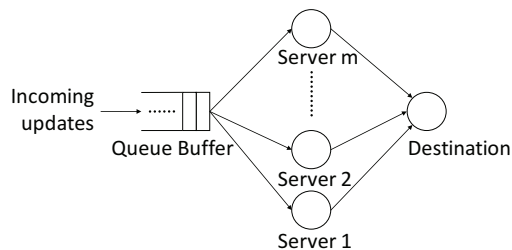}
\centering
\caption{System model.}\label{Fig:sysMod}
\vspace{-0.3cm}
\end{figure}
%
Another important problem is how to maximize data freshness in information-update systems. This involves jointly controlling both the generation and transmission of packet updates \cite{BacinogCeranUysal_Biyikoglu2015ITA,2015ISITYates,generat_at_will}.  An information update policy was developed in \cite{generat_at_will}, which was proven to minimize the time-average age and time-average age penalty among all causally feasible policies. In this setting, a  counter-intuitive phenomenon was revealed:  While a zero-wait  or work-conserving  policy, that generates and submits a fresh update once the server becomes idle, achieves the maximum throughput and the minimum average delay, surprisingly, this zero-wait policy does not always minimize the age. This implies that there is no policy that can simultaneously minimize age and maximize throughput, if the generation and transmission of update packets are jointly controlled.

In this paper, we consider an information-update system which enqueues incoming update packets and forwards them to a remote destination through multiple servers, as shown in Fig. \ref{Fig:sysMod}. In this setting, the updates are generated exogenously to the system, which is different from \cite{generat_at_will}.  We aim to answer the following questions: How to establish age-optimality in a general policy space and under arbitrary arrival process? Is it possible to simultaneously optimize multiple performance metrics, such as age, throughput, and delay? To that end, the following are the key contributions of this paper: 
\begin{itemize}

\item We consider a general scenario where the update packets do not necessarily  arrive in the order of their generation times, which has not been considered before. We prove that, if the packet service times are \emph{i.i.d.} exponentially distributed, then for an arbitrary arrival process and any queue size, a preemptive Last-Generated First-Served (LGFS) policy achieves an age process that is stochastically smaller than any causally feasible policies (Theorem \ref{thm1}). This implies that the preemptive LGFS policy minimizes any non-decreasing functional of the age process. Examples of non-decreasing age penalty functionals include time-average age \cite{KaulYatesGruteser-Infocom2012,2012ISIT-YatesKaul,CostaCodreanuEphremides2014ISIT,Icc2015Pappas,
2012CISS-KaulYatesGruteser,BacinogCeranUysal_Biyikoglu2015ITA,2015ISITYates}, average peak age \cite{2015ISITHuangModiano,CostaCodreanuEphremides2014ISIT,BacinogCeranUysal_Biyikoglu2015ITA}, and time-average age penalty function \cite{generat_at_will}. The intuition is that the freshest update packets are served as early as possible in the preemptive LGFS policy. In particular, the distribution of the age process of the preemptive LGFS policy is invariant over all queue sizes. To the best of our knowledge, this paper presents the first optimal result on minimizing the age-of-information in communication networks with an external arrival process of information update packets.

\item  In addition, we show that if the buffer has an infinite size, then the preemptive LGFS policy is also throughput-optimal and delay-optimal among all causally feasible policies (Theorem \ref{thm2}). \end{itemize}

We note that when the incoming update packets are arriving in the same order of their generation times, the proposed LGFS policy is identical to the LCFS policy studied in \cite{2012CISS-KaulYatesGruteser}. In particular, the time-average age of preemptive and non-preemptive LCFS policies are analyzed in \cite{2012CISS-KaulYatesGruteser} for single-server queueing systems with Poisson arrival process and a queue size of one packet. This paper complements and generalizes the results in \cite{2012CISS-KaulYatesGruteser} by (i) allowing the incoming updates to not arrive in the order of their generation times, (ii) considering more general multi-server queueing systems with arbitrary update arrivals and arbitrary queue size, and (iii) providing an age-optimality proof. 

\section{Model and Formulation}\label{sysmod}
\subsection{Notations and Definitions}
Throughout this paper, for any random variable $Z$ and an event $A$, let $[Z\vert A]$ denote a random variable with the conditional distribution of $Z$ for given $A$, and $\mathbb{E}[Z\vert A]$ denote the conditional expectation of $Z$ for given $A$.

Let $\mathbf{x}=(x_1,x_2,\ldots,x_n)$ and $\mathbf{y}=(y_1,y_2,\ldots,y_n)$ be two vectors in $\mathbb{R}^n$, then we denote $\mathbf{x}\leq\mathbf{y}$ if $x_i\leq y_i$ for $i=1,2,\ldots,n$. A set $U\subseteq \mathbb{R}^n$ is called upper if $\mathbf{y}\in U$ whenever $\mathbf{y}\geq\mathbf{x}$ and $\mathbf{x}\in U$. We will need the following definitions: 
\begin{definition} \textbf{ Univariate Stochastic Ordering:} \cite{shaked2007stochastic} Let $X$ and $Y$ be two random variables. Then, $X$ is said to be stochastically smaller than $Y$ (denoted as $X\leq_{\text{st}}Y$), if
\begin{equation*}
\begin{split}
\mathbb{P}\{X>x\}\leq \mathbb{P}\{Y>x\}, \quad \forall  x\in \mathbb{R}.
 \end{split}
\end{equation*}
\end{definition}
\begin{definition}\label{def_2} \textbf{Multivariate Stochastic Ordering:} \cite{shaked2007stochastic} 
Let $\mathbf{X}$ and $\mathbf{Y}$ be two random vectors. Then, $\mathbf{X}$ is said to be stochastically smaller than $\mathbf{Y}$ (denoted as $\mathbf{X}\leq_\text{st}\mathbf{Y}$), if
\begin{equation*}
\begin{split}
\mathbb{P}\{\mathbf{X}\in U\}\leq \mathbb{P}\{\mathbf{Y}\in U\}, \quad \text{for all upper sets} \quad U\subseteq \mathbb{R}^n.
 \end{split}
\end{equation*}
\end{definition}
\begin{definition} \textbf{ Stochastic Ordering of Stochastic Processes:} \cite{shaked2007stochastic} Let $\{X(t), t\in [0,\infty)\}$ and $\{Y(t), t\in[0,\infty)\}$ be two stochastic processes. Then, $\{X(t), t\in [0,\infty)\}$ is said to be stochastically smaller than $\{Y(t), t\in [0,\infty)\}$ (denoted by $\{X(t), t\in [0,\infty)\}\leq_\text{st}\{Y(t), t\in [0,\infty)\}$), if, for all choices of an integer $n$ and $t_1<t_2<\ldots<t_n$ in $[0,\infty)$, it holds that
\begin{align}\label{law9'}
\!\!\!(X(t_1),X(t_2),\ldots,X(t_n))\!\leq_\text{st}\!(Y(t_1),Y(t_2),\ldots,Y(t_n)),\!\!
\end{align}
where the multivariate stochastic ordering in \eqref{law9'} was defined in Definition \ref{def_2}.
\end{definition}

\subsection{Queuing System Model}
We consider an information-update system with $m$ identical servers as shown in Fig. \ref{Fig:sysMod}. Each server could be a wireless channel, a TCP connection, etc. The update packets are generated exogenously to the system and then are stored in a queue, waiting to be assigned to one of the servers. Let $B$ denote the buffer size of the queue which can be infinite, finite, or even zero. If $B$ is finite, the queue buffer may overflow and some packets are dropped, which would incur a throughput loss. The packet service times are exponentially distributed with rate $\mu$, which are \emph{i.i.d.} across time and servers.

\subsection{Scheduling Policy}\label{Schpolicy}

The system starts to operate at time $t=0$. A sequence of $n$ update packets are generated at time instants $s_1,\ldots, s_n$, where $n$ can be an arbitrary finite or infinite number, and $0\leq s_1\leq s_2\leq \ldots\leq s_n$. Let $a_i$ be the arrival time of the packet generated at time $s_i$, such that $s_i\leq a_i$. We let $\pi$ denote a scheduling policy that assigns update packets to the servers over time.  The $i$-th generated packet, called packet $i$, completes service at time $c_i$, which depends on the scheduling policy. The packet generation times $(s_1, s_2, \ldots, s_n)$ and packet arrival times $(a_{1}, a_{2}, \ldots, a_{n})$ at the system are arbitrary given, which are independent of the scheduling policy. Note that the update packets may arrive at the system out of the order of their generation times. For example, it may happen that $a_i>a_{i+1}$ but $s_i < s_{i+1}$. 
 
Let $\Pi$  denote the set of all {causal} policies, in which scheduling decisions are made based on the history and current state of the system. We define several types of policies in $\Pi$:

A policy is said to be \textbf{preemptive}, if a server can switch to send any packet at any time; the preempted packets will be stored back into the queue if there is enough buffer space and sent at a later time when the servers are available again. In contrast, in a \emph{non-preemptive} policy, a server must complete delivering the current packet before starting to send another packet.
A policy is said to be \textbf{work-conserving}, if no server is idle when there are packets waiting in the queue.  
 
 
\subsection{Performance Metric}
\begin{figure}
\includegraphics[scale=0.22]{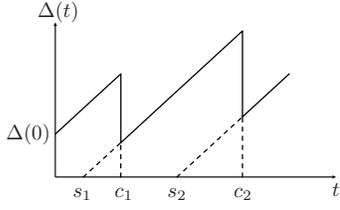}
\centering
\caption{Evolution of the age-of-information $\Delta(t)$.}\label{Fig:Age}
\vspace{-0.3cm}
\end{figure}
Let $U(t)=\max\{s_i : c_{i}\leq t\}$ be the generation time of the freshest packet at the destination at time $t$, where $U(0^-)$ is invariant of the policy $\pi\in\Pi$. The \emph{age-of-information}, or simply the \emph{age}, is defined as
\begin{equation}\label{age}
\begin{split}
\Delta(t)=t-U(t).
\end{split}
\end{equation} 
As shown in Fig. \ref{Fig:Age}, the age increases linearly with $t$ but is reset to a smaller value with the arrival of a fresher packet. The age process is given by 
\begin{align}
\Delta=\{\Delta(t), t\in [0,\infty)\}.
\end{align}
\begin{definition} \textbf{Age Penalty Functional:}
A functional $g(\Delta)$ is said to be an \emph{age penalty functional}, if $g$ is non-decreasing in the following sense: 
\begin{equation}
\begin{split}
&g(\Delta_1) \leq g(\Delta_2),\\
&\text{whenever}\quad \Delta_{1}(t)\leq \Delta_{2}(t),\quad\!\! \forall t\in [0,\infty). 
\end{split}
\end{equation}
\end{definition}
This type of age penalty functional represents the level of ``dissatisfaction'' for data staleness in the network and the ``need'' for fresher information updates. 
Existing examples of  age penalty functionals include:
\begin{itemize}
\item \emph{Time-average age \cite{KaulYatesGruteser-Infocom2012,2012ISIT-YatesKaul,CostaCodreanuEphremides2014ISIT,Icc2015Pappas,
2012CISS-KaulYatesGruteser,BacinogCeranUysal_Biyikoglu2015ITA,2015ISITYates}:} The time-average age is defined as
\begin{equation}\label{functional1}
g_1(\Delta)=\frac{1}{T}\int_{0}^{T} \Delta(t) dt,
\end{equation}
\item \emph{Average peak age \cite{2015ISITHuangModiano,CostaCodreanuEphremides2014ISIT,BacinogCeranUysal_Biyikoglu2015ITA}:} The average peak is defined as 
\begin{equation}\label{functional2}
g_2(\Delta)=\frac{1}{K}\sum_{k=1}^{K} A_{k},
\end{equation}
where $A_{k}$ denotes the $k$-th peak value of $\Delta(t)$ since time $t=0$. 
\item \emph{Time-average age penalty function \cite{generat_at_will}:} The time-average age penalty function is defined as
\begin{equation}\label{functional3}
g_3(\Delta)= \frac{1}{T}\int_{0}^{T} h(\Delta(t)) dt,
\end{equation}
where $h$ : $[0,\infty)\to [0,\infty)$ can be any non-negative and non-decreasing function.
\end{itemize}

\section{Optimality Analysis}\label{proof1} 
 \begin{algorithm}[h]
\SetKwData{NULL}{NULL}
\SetCommentSty{small} 
$\alpha:=0$\;
\While{the system is ON} {
\If{a new packet with generation time $s$ arrives}{ 
\uIf{all servers are busy}{
\uIf{ $s\leq\alpha$}{
Store the packet in the queue\;}
\Else(~~~~~~\tcp*[h]{The packet carries fresh information.}){
The new packet is assigned to a server by preempting the packet with generation time $\alpha$\; 
The preempted packet with generation time $\alpha$ is stored back to the queue\;
Set $\alpha$ as the smallest generation time of the packets under service\;
}}
\Else(~~~~~~\tcp*[h]{At least one of the servers is idle.})
{
Assign the new packet to one idle server\;
Set $\alpha$ as the smallest generation time of the packets under service\;
} 
}

\If{a packet is delivered}{
 \If{ the queue is not empty}{
 Pick the freshest packet in the queue and assign it to the idle server\;
 Set $\alpha$ as the smallest generation time of the packets under service\;
 }
}
}
\caption{Preemptive Last Generated First Served policy.}\label{alg1}
\end{algorithm}
 In this section, we study a LGFS policy, in which the packets under service are generated the latest (i.e., the freshest) among all packets in the queue; after service, the next freshest packet in the queue is assigned to the idle server. The implementation details of a preemptive LGFS (prmp-LGFS) policy is depicted in Algorithm \ref{alg1}, where $\alpha$ is the smallest generation time of the packets under service. 
 
Define a set of parameters $\mathcal{I}=\{n,(s_i, a_{i})_{i=1}^{n}, B\}$, where $n$ is the total number of packets, $s_i$ and $a_{i}$ are the generation time and the arrival time of packet $i$, respectively, and $B$ is the queue buffer size. Let $\Delta_{\pi}=\{\Delta_{\pi}(t), t\in [0,\infty)\}$ be the age processes under policy $\pi$. The age performance of prmp-LGFS policy is provided in the following theorem.
\begin{theorem}\label{thm1}
Suppose that the packet service times are exponentially distributed and \emph{i.i.d.} across time and servers, then for all $\mathcal{I}$ and $\pi\in\Pi$
\begin{equation}\label{thm1eq1}
[\Delta_{\text{prmp-LGFS}}\vert\mathcal{I}]\leq_{\text{st}}[\Delta_{\pi}\vert\mathcal{I}],
\end{equation}
or equivalently, for all $\mathcal{I}$ and non-decreasing functional $g$
\begin{equation}\label{thm1eq2}
\mathbb{E}[g(\Delta_{\text{prmp-LGFS}})\vert\mathcal{I}]=\min_{\pi\in\Pi}\mathbb{E}[g(\Delta_\pi)\vert\mathcal{I}],
\end{equation}
provided the expectations exist.
\end{theorem}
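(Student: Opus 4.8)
\emph{Proof plan.} The two displayed statements are equivalent, so I will aim at \eqref{thm1eq1}. That \eqref{thm1eq1} implies \eqref{thm1eq2} is the standard fact that $X\leq_{\text{st}}Y$ yields $\mathbb{E}[\phi(X)]\leq\mathbb{E}[\phi(Y)]$ for non-decreasing $\phi$ \cite{shaked2007stochastic}, together with $\text{prmp-LGFS}\in\Pi$ so that the minimum is attained; the converse comes from specializing $g$ to indicators of the finite-dimensional upper sets of Definition~\ref{def_2}. Now $\Delta_\pi(t)=t-U_\pi(t)$ with $U_\pi(t)=\max\{s_i:c_i\leq t\}$ the only policy-dependent quantity, so \eqref{thm1eq1} reduces to producing, for each fixed $\pi$, a coupling of the prmp-LGFS system and the $\pi$ system under which $U_{\text{prmp-LGFS}}(t)\geq U_\pi(t)$ for all $t$ almost surely. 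I would build this coupling from the memoryless property of the exponential service: since resuming or restarting an exponential service is distributionally a fresh exponential, the departure epochs of the $m$ servers may be generated by $m$ i.i.d.\ rate-$\mu$ Poisson processes (a busy server releases its current packet at its next Poisson point), or, equivalently, by a single rate-$m\mu$ clock with an i.i.d.\ uniform server index at each tick; feeding the same clocks and the same (deterministic given $\mathcal{I}$) arrival epochs $a_1,\dots,a_n$ to both systems gives the common probability space. Because the servers are identical, I also keep the freedom to choose, inside prmp-LGFS, which packet occupies which server, and I will use it to shadow $\pi$'s occupancy.

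With the coupling fixed, the core is a sample-path induction over the ordered sequence $0=\tau_0<\tau_1<\tau_2<\cdots$ of arrival and (coupled) departure epochs; no packet is delivered strictly between consecutive epochs, so each $U_\pi(\cdot)$ is constant on $(\tau_\ell,\tau_{\ell+1}]$ and it suffices to compare the systems at the $\tau_\ell$. For a generation-time level $\theta$, let $N_\pi(\theta,t)$ be the number of delivered packets with generation time $>\theta$ under $\pi$ at time $t$, and $\widehat N_\pi(\theta,t)$ the number with generation time $>\theta$ that are delivered \emph{or} in service under $\pi$ at time $t$. The invariant I would propagate is
\begin{equation*}
N_{\text{prmp-LGFS}}(\theta,\tau_\ell)\geq N_\pi(\theta,\tau_\ell)\ \ \text{and}\ \ \widehat N_{\text{prmp-LGFS}}(\theta,\tau_\ell)\geq \widehat N_\pi(\theta,\tau_\ell),\quad\forall\,\theta,\ \forall\,\ell,
\end{equation*}
noting that the first inequality, read as ``some delivered packet is fresher than $\theta$'', already gives $U_{\text{prmp-LGFS}}(\tau_\ell)\geq U_\pi(\tau_\ell)$. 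The base case $\ell=0$ is empty. At an arrival epoch nothing is delivered, so the first inequality is untouched, and the second is preserved because prmp-LGFS is work-conserving and serves the freshest available packets, hence matches any above-$\theta$ packet that $\pi$ puts into service. At a departure epoch at the coupled server $j$, prmp-LGFS completes the freshest packet available at $j$; using work-conservation, the preemptive-LGFS (fresh-first) rule, and the \emph{delivered-or-in-service} half of the hypothesis, one checks that any above-$\theta$ completion made by $\pi$ is matched by a simultaneous above-$\theta$ completion of prmp-LGFS or by above-$\theta$ service prmp-LGFS has already finished, so both inequalities are restored.

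The step I expect to fight with is exactly this departure step for $m\geq 2$. The obstruction is that a fresh packet may be in progress at one server under $\pi$ while under prmp-LGFS it is at a different server, already delivered, or preempted and back in the queue, so a per-server departure coupling does not automatically make ``the same packet'' finish under both policies; a competitor $\pi$ can momentarily have more fresh deliveries than prmp-LGFS has plain deliveries. Closing the induction then needs all of: (i) the server-shadowing choice within prmp-LGFS; (ii) the two-part invariant, so that such a deficit is always absorbed by packets prmp-LGFS already has in service (the $\widehat N$ part); and (iii) checking this uniformly in the number of servers $m$, in the buffer size $B$ (including $B=0$ and $B=\infty$), and against arbitrary --- possibly idling, possibly non-preemptive --- $\pi\in\Pi$. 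Once the invariant is shown to survive every event, letting $\theta$ range and using constancy between events gives $U_{\text{prmp-LGFS}}\geq U_\pi$ pathwise, hence \eqref{thm1eq1}, hence \eqref{thm1eq2}.
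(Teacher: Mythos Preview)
Your plan is sound in outline---coupling via memorylessness plus a sample-path induction over arrival/departure events is exactly the paper's strategy---but the paper executes it with two simplifications that dissolve the very ``departure step for $m\geq 2$'' you flag as the obstacle.

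First, the paper does \emph{not} couple per server. It couples \emph{by rank}: when the packet with the $l$-th largest in-service generation time completes under prmp-LGFS, the packet with the $l$-th largest in-service generation time completes under $\pi$, for the same $l$ and at the same instant. This is valid by memorylessness (the joint law of the ordered residual exponentials is the same regardless of which physical server holds which rank), and it removes any need for your server-shadowing device: there is no physical server index to align. Second, the paper's invariant is not your two-family counting invariant but simply $U_P(t)\geq U_\pi(t)$, together with a structural lemma that, under prmp-LGFS, the sorted in-service generation times satisfy $\alpha_{i,P}=\max\{s_{[i]},U_P\}$, where $s_{[i]}$ is the $i$-th largest generation time among \emph{all} packets that have arrived so far---a quantity that is the same for every policy. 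Since trivially $\alpha_{i,\pi}\leq\max\{s_{[i]},U_\pi\}$, the single inequality $U_P\geq U_\pi$ already forces $\alpha_{i,P}\geq\alpha_{i,\pi}$ for every $i$; then a rank-$l$ departure gives $U_P'=\alpha_{l,P}\geq\alpha_{l,\pi}=U_\pi'$, and an arrival leaves $U$ unchanged, closing the induction in one line each. Your $\widehat N(\theta,\cdot)$ inequality is essentially the same statement as $\alpha_{i,P}\geq\alpha_{i,\pi}$, but you are trying to propagate it directly, whereas the paper \emph{derives} it instantly from $U_P\geq U_\pi$ via the structural characterization of prmp-LGFS; that is what makes the departure step trivial rather than the fight you anticipate.

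One more difference: the paper first restricts to work-conserving $\pi$ (so both systems always have the same number of ``active'' ranks, after the convention that idle/stale servers carry the value $U$), and dispatches non-work-conserving and idling policies by a separate monotonicity remark. Your proposal treats all $\pi\in\Pi$ at once, which is where the worry about ``$P$ idle, $\pi$ busy'' at a server tick would actually bite.
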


We need to define the system state of any policy $\pi$:\\
\begin{definition}  At any time $t$, the system state of policy $\pi$ is specified by  $\mathbf{V}_{\pi}(t)=(U_{\pi}(t), \alpha_{1,\pi}(t),\ldots,\alpha_{m,\pi}(t)) $, where $U_{\pi}(t)$ is the generation time of the freshest packet that have already been delivered to the destination. Define $\alpha_{i,\pi}(t)$ as the $i$-th largest generation time of the packets being processed by the servers. Without loss of generality, if $k$ servers are sending stale packets (i.e., $\alpha_{m,\pi}(t) \leq \alpha_{(m-1),\pi}(t) \ldots\leq \alpha_{(m-k+1),\pi}(t) \leq U_\pi(t)$) or $k$ servers are idle, then we set $\alpha_{m,\pi}(t) =\ldots =\alpha_{(m-k+1),\pi}(t) = U_\pi(t)$. Hence, 
\begin{align}\label{sys_state_def.}
U_\pi (t)\leq \alpha_{m,\pi}(t)\leq\ldots\leq\alpha_{1,\pi}(t). 
\end{align}
Let $\{\mathbf{V}_{\pi}(t), t\in[0,\infty)\}$ be the state process of policy $\pi$, which is assumed to be right-continuous. For notational simplicity, let policy $P$ represent the prmp-LGFS policy. By the construction of policy $P$, ${\alpha}_{1,P}(t),{\alpha}_{2,P}(t),\ldots,$ ${\alpha}_{m,P}(t)$ are the generation times of $m$ freshest packets among all packets arrived during $[0,t]$.
\end{definition}

The key step in the proof of Theorem \ref{thm1} is the following lemma, where we compare policy $P$ with any work-conserving policy $\pi$.

 \begin{lemma}\label{lem2}
Suppose that $\mathbf{V}_P(0^-)=\mathbf{V}_{\pi}(0^-)$ for all work conserving policies $\pi$, then for all $\mathcal{I}$
\begin{equation}\label{law9}
\begin{split}
 [\{V_P(t),  t\in[0,\infty)\}\vert\mathcal{I}]\!\geq_{\text{st}}\! [\{V_{\pi}(t), t\in[0,\infty)\}\vert\mathcal{I}].
 \end{split}
\end{equation}
\end{lemma}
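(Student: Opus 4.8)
The plan is to upgrade the statement to an almost-sure sample-path inequality under a coupling and then apply the standard coupling characterization of $\leq_{\text{st}}$ for stochastic processes \cite{shaked2007stochastic}: it suffices to construct, on one probability space, coupled realizations $\hat{\mathbf{V}}_P$ and $\hat{\mathbf{V}}_\pi$ with the correct marginal laws such that $\hat{\mathbf{V}}_P(t)\geq\hat{\mathbf{V}}_\pi(t)$ componentwise for every $t\geq 0$, almost surely.

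First I would set up the coupling. Both systems are fed the same arrival data $\{(s_i,a_i)\}$ given by $\mathcal{I}$. For the service completions I would use the memoryless property. Let $f(t)$ be the number of packets in the system (queue plus servers): since policy $P$ and every work-conserving $\pi$ keep exactly $\min(m,f(t))$ servers busy and accept or drop an arriving packet only according to whether $f(t)<m+B$, the occupancy $f(t)$ is the \emph{same} process for $P$ and for $\pi$ once the arrival epochs and completion epochs are fixed. Whenever $f(t)\geq 1$ the residual time to the next completion is $\mathrm{Exp}\big(\min(m,f(t))\,\mu\big)$ in each system, so we may couple the two to complete their $\ell$-th service at the same instant; this is self-consistent precisely because the instantaneous completion rate $\min(m,f(t))\mu$ is common to both, and it is the only place the exponential assumption is used. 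Finally, since $\mathbf{V}_P$ does not depend on which server holds which packet, at each completion I let the completing server of $P$ be the one holding $P$'s freshest in-service packet.

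Next I would prove $\hat{\mathbf{V}}_P(t)\geq\hat{\mathbf{V}}_\pi(t)$ by induction over the common event epochs $0=t_0<t_1<\cdots$ (the state is right-continuous and piecewise constant), maintaining the single invariant $U_P(t)\geq U_\pi(t)$; the base case is the hypothesis at $t=0^-$. Given this invariant at a time $t$, the ordering of the other coordinates is automatic: every packet already delivered by $P$ has generation time $\leq U_P(t)$, so for any $x\geq U_P(t)$ a packet in service under $\pi$ with generation time $>x$ is necessarily still available to $P$, whence the number of such $\pi$-in-service packets is at most the number of $P$-available packets with generation time $>x$, which---because $P$ serves the $m$ freshest available packets---equals the number of $P$-in-service packets above $x$; for $x<U_P(t)$ the latter count is $m$, which trivially dominates. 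In every case $\#\{\text{$\pi$ in service}>x\}\leq\#\{\text{$P$ in service}>x\}$ for all $x$, i.e.\ $\alpha_{j,P}(t)\geq\alpha_{j,\pi}(t)$ for all $j$. It then remains to propagate $U_P\geq U_\pi$: it is untouched at an arrival, and at a completion $P$ delivers its freshest in-service packet, so $U_P(t)=\alpha_{1,P}(t^-)\geq\alpha_{1,\pi}(t^-)\geq\max\{U_\pi(t^-),\,\text{generation time delivered by }\pi\}=U_\pi(t)$, using the coordinate ordering already established at $t^-$. Right-continuity extends the inequality from the epochs to all $t$, and the coupling characterization yields the lemma.

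I expect the coupling step, rather than the induction, to be the main obstacle: one must verify that synchronizing the $\ell$-th completion epochs of $P$ and $\pi$ does not distort either marginal law, and this rests on the observation that all work-conserving policies driven by the same arrivals and the same completion epochs share the identical occupancy $f(t)$---hence the identical instantaneous service rate---so that the memoryless splitting is consistent. The inductive bookkeeping for the $\alpha$-coordinates (the threshold counting) is then routine once the maximality of policy $P$'s in-service set is in hand.
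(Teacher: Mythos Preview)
Your overall architecture (couple, then do forward induction on $U_P\geq U_\pi$, derive the $\alpha$-ordering from it) matches the paper's. But the coupling you propose is not law-preserving for $\mathbf V_P$, and that breaks the argument.

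You write: ``since $\mathbf V_P$ does not depend on which server holds which packet, at each completion I let the completing server of $P$ be the one holding $P$'s freshest in-service packet.'' It is true that $\mathbf V_P(t)$ does not record the packet-to-server assignment, but $\mathbf V_P(t^+)$ \emph{does} depend on which in-service packet is delivered at $t$: under exponential service the delivered packet is uniformly distributed over the in-service set, not deterministically the freshest. A two-server example makes this concrete: with $U_P=0$ and packets of generation times $1,2$ in service, the true $P$ process has $\mathbf V_P$ after the first completion equal to $(1,2,1)$ or $(2,2,2)$ with probability $1/2$ each, whereas your construction yields $(2,2,2)$ almost surely. So your $\hat{\mathbf V}_P$ is not a version of $\mathbf V_P$, and the coupling characterization of $\leq_{\text{st}}$ does not apply. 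Your inductive step $U_P(t)=\alpha_{1,P}(t^-)$ then inherits this error.

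The fix---and this is exactly what the paper does---is a \emph{rank} coupling: let the completion epochs be shared (your occupancy argument for this part is fine), and when a completion occurs, have both systems deliver the packet of the \emph{same rank} $l$ in their respective sorted in-service lists. Memorylessness and exchangeability make the delivered rank uniform on the busy set in \emph{each} system, so this preserves both marginal laws. The induction then reads $U_P'=\alpha_{l,P}\geq\alpha_{l,\pi}=U_\pi'$, using the full coordinate-wise ordering $\alpha_{i,P}\geq\alpha_{i,\pi}$ you already established from $U_P\geq U_\pi$ (your threshold-counting argument for that implication is correct and is the content of the paper's Lemma~2*). With this one change the rest of your plan goes through.
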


 We use coupling and forward induction to prove Lemma \ref{lem2}.
For any work-conserving policy $\pi$, suppose that stochastic processes $\widetilde{\mathbf{V}}_{P}(t)$ and $\widetilde{\mathbf{V}}_{\pi}(t)$ have the same stochastic laws as $\mathbf{V}_{P}(t)$  and $\mathbf{V}_{\pi}(t)$. 
The state processes $\widetilde{\mathbf{V}}_{P}(t)$ and $\widetilde{\mathbf{V}}_{\pi}(t)$
are coupled in the following manner: If the packet with generation time $\widetilde{\alpha}_{i,P}(t)$ is delivered at time $t$ as $\widetilde{\mathbf{V}}_{P}(t)$ evolves, then the packet with generation time $\widetilde{\alpha}_{i,\pi}(t)$ is delivered at time $t$  as $\widetilde{\mathbf{V}}_{\pi}(t)$ evolves. 
Such a coupling is valid since the service time is exponentially distributed and thus memoryless. Moreover, policy $P$ and policy $\pi$ have identical packet generation times $(s_1, s_2, \ldots, s_n)$ and packet arrival times $(a_1, a_2, \ldots, a_n)$. According to Theorem 6.B.30 in \cite{shaked2007stochastic}, if we can show 
\begin{equation}\label{main_eq}
\begin{split}
\mathbb{P}[\widetilde{\mathbf{V}}_{P}(t)\geq\widetilde{\mathbf{V}}_{\pi}(t), t\in[0,\infty)\vert\mathcal{I}]=1,
\end{split}
\end{equation}
then \eqref{law9} is proven.
To ease the notational burden, we will omit the tildes henceforth on the coupled versions and just use $\mathbf{V}_P(t)$ and $\mathbf{V}_{\pi}(t)$. Next, we use the following lemmas to prove \eqref{main_eq}:
\newtheorem{lem}{Lemma}[section] 
\newtheorem{lemx}{Lemma}
\renewcommand{\thelemx}{2*} 
\begin{lemx}\label{lem3'}
Suppose that the system state of policy $P$ is $\{U_P,  \alpha_{1,P},\ldots, \alpha_{m,P}\}$, and meanwhile the system state of policy $\pi$ is $\{U_{\pi}, \alpha_{1,\pi},\ldots,\alpha_{m,\pi}\}$. If
\begin{equation}\label{hyp1'}
U_P \geq U_{\pi},
\end{equation}
then,
\begin{equation}\label{law6'}
 \alpha_{i,P} \geq \alpha_{i,\pi}, \quad \forall i=1,\ldots,m.
\end{equation}
\end{lemx}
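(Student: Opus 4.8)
The plan is to treat this as a purely sample-path statement comparing the two states at the common time instant --- no distributional assumptions enter. The device is to express both $\alpha$-vectors through a single policy-independent quantity. Given the shared data $\mathcal{I}$, let $\gamma_{(i)}(t)$ be the $i$-th largest generation time among all packets that have arrived during $[0,t]$, with the convention $\gamma_{(i)}(t)=U(0^-)$ when fewer than $i$ packets have arrived; thus $\gamma_{(1)}(t)\ge\gamma_{(2)}(t)\ge\cdots$ depend only on $\mathcal{I}$ and not on the scheduling policy.

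First I would show that under policy $P$ one has $\alpha_{i,P}=\max\{\gamma_{(i)}(t),U_P\}$ for every $i=1,\dots,m$; this is essentially the remark recorded right after the definition of $\mathbf{V}_\pi(t)$, namely that by the LGFS rule the servers of $P$ always hold the freshest undelivered packets. To make it precise I would first establish the invariant that, under $P$, at every time the generation time of any packet waiting in the queue is no larger than the generation time of every packet in service; one checks that this property is preserved across arrivals, preemptions and departures from the way Algorithm~\ref{alg1} resets $\alpha$. Since any arrived packet with generation time exceeding $U_P$ has not been delivered (otherwise $U_P$ would be at least that value), the invariant forces all such packets into service; hence for each index $i$ with $\gamma_{(i)}(t)>U_P$ the $i$-th freshest arrived packet is in service and yields $\alpha_{i,P}=\gamma_{(i)}(t)$, while the remaining indices correspond to servers that are idle or sending a packet with generation time $\le U_P$, which the convention sets to $U_P=\max\{\gamma_{(i)}(t),U_P\}$.

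Next I would bound the state of the arbitrary policy $\pi$ from above. The packets that $\pi$ has in service at that instant are distinct, already-arrived packets; listing their generation times in decreasing order as $s_{j_1}\ge s_{j_2}\ge\cdots$, we have $s_{j_i}\le\gamma_{(i)}(t)$, because the $i$-th largest value over a subset is at most the $i$-th largest over the whole arrival set. Applying the same stale/idle-server convention gives $\alpha_{i,\pi}=\max\{s_{j_i},U_\pi\}$ (and $\alpha_{i,\pi}=U_\pi$ whenever $\pi$ has fewer than $i$ busy servers), so $\alpha_{i,\pi}\le\max\{\gamma_{(i)}(t),U_\pi\}$ for every $i$. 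This step uses nothing about $\pi$ beyond the fact that a server can only process a packet that has already arrived, so it holds for every $\pi\in\Pi$.

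Finally, combining the two bounds with the hypothesis $U_P\ge U_\pi$ gives, for every $i$,
\[
\alpha_{i,\pi}\ \le\ \max\{\gamma_{(i)}(t),U_\pi\}\ \le\ \max\{\gamma_{(i)}(t),U_P\}\ =\ \alpha_{i,P},
\]
which is precisely \eqref{law6'}. The only delicate point is the first step: pinning down the identity $\alpha_{i,P}=\max\{\gamma_{(i)}(t),U_P\}$ in the presence of preemptions, of the convention that overwrites stale-or-idle servers' entries with $U_P$, and of possible ties among the $s_i$'s. Everything after that is a one-line monotonicity argument, and the LGFS state structure needed is already foreshadowed by the remark following the definition of the system state.
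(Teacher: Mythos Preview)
Your proposal is correct and follows essentially the same route as the paper: introduce the policy-independent order statistics $s_{[i]}$ (your $\gamma_{(i)}(t)$) of the arrived packets, assert $\alpha_{i,P}=\max\{s_{[i]},U_P\}$ for the LGFS policy, bound $\alpha_{i,\pi}\le\max\{s_{[i]},U_\pi\}$ for any $\pi$, and conclude via $U_P\ge U_\pi$. The only difference is that you flesh out the justification for the identity $\alpha_{i,P}=\max\{s_{[i]},U_P\}$ via an invariant about queue versus in-service packets, whereas the paper simply cites the definition of $P$ and the system state.
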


\begin{proof}
See Appendix~\ref{Appendix_A'}.
\end{proof}

\begin{lemma}\label{lem3}
Suppose that under policy $P$, $\{U_P',  \alpha_{1,P}',\ldots, \alpha_{m,P}'\}$ is obtained by delivering a packet with generation time $\alpha_{l,P}$ to the destination in the system whose state is $\{U_P,  \alpha_{1,P},\ldots, \alpha_{m,P}\}$. Further, suppose that under policy $\pi$, $\{U_{\pi}', \alpha_{1,\pi}',\ldots,\alpha_{m,\pi}'\}$ is obtained  by delivering a packet with generation time $\alpha_{l,\pi}$ to the destination in the system whose state is $\{U_{\pi}, \alpha_{1,\pi},\ldots,\alpha_{m,\pi}\}$. If
\begin{equation}\label{hyp1}
\alpha_{i,P} \geq \alpha_{i,\pi}, \quad \forall  i=1,\ldots,m,
\end{equation}
then,
\begin{equation}\label{law6}
U_P' \geq U_{\pi}',  \alpha_{i,P}' \geq \alpha_{i,\pi}', \quad \forall i=1,\ldots,m.
\end{equation}
\end{lemma}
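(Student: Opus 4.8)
The plan is to establish \eqref{law6} directly from the coupled system states immediately before and immediately after the common delivery event; this one-step comparison is exactly the ``delivery step'' that the forward induction proving \eqref{main_eq} requires.

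The inequality $U_P'\geq U_{\pi}'$ is the easy part. Under either policy the delivered packet is one currently in service, so its generation time is at least the freshest generation time already at the destination; with the convention in \eqref{sys_state_def.} this gives $\alpha_{l,P}\geq\alpha_{m,P}\geq U_P$ and $\alpha_{l,\pi}\geq U_{\pi}$, hence $U_P'=\max\{U_P,\alpha_{l,P}\}=\alpha_{l,P}$ and $U_{\pi}'=\alpha_{l,\pi}$. Then \eqref{hyp1} with $i=l$ yields $U_P'=\alpha_{l,P}\geq\alpha_{l,\pi}=U_{\pi}'$.

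For the vector inequality $\alpha_{i,P}'\geq\alpha_{i,\pi}'$ for all $i$, I would use that, for two vectors each sorted in non-increasing order, componentwise domination is equivalent to the first vector having, for every $x$, at least as many coordinates $\geq x$ as the second. So it suffices to show $N_P(x)\geq N_{\pi}(x)$ for every $x$, where $N_{\rho}(x)$ counts the post-delivery in-service generation times of policy $\rho$ that are $\geq x$. For $x\leq U_P'$ this holds trivially since $N_P(x)=m$. For $x>U_P'$ I would first describe the post-delivery states: in each policy the servers that previously held ranks $l+1,\dots,m$ now carry packets of generation time $\leq\alpha_{l,P}=U_P'<x$ (resp.\ $\leq\alpha_{l,\pi}=U_{\pi}'<x$) and so drop out of the count, while the single freed server starts serving one queued packet, of generation time $\gamma_P$ (resp.\ $\gamma_{\pi}$), with the convention $\gamma_{\cdot}=-\infty$ when the queue is empty. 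The structural input is the construction of policy $P$: it keeps the $m$ freshest not-yet-delivered packets in service at all times, so every packet waiting in $P$'s queue is staler than every $P$-packet in service, whence $\gamma_P\leq\alpha_{m,P}\leq\alpha_{l,P}=U_P'<x$. Consequently $N_P(x)=M_P(x)$, the number of \emph{old} in-service $P$-packets of generation time $\geq x$, while for $\pi$ there is no such constraint: $N_{\pi}(x)$ equals $M_{\pi}(x)$ if $\gamma_{\pi}<x$ and $M_{\pi}(x)+1$ if $\gamma_{\pi}\geq x$. When $\gamma_{\pi}<x$ the needed bound $M_P(x)\geq M_{\pi}(x)$ is immediate from \eqref{hyp1}.

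The main obstacle is the remaining case $\gamma_{\pi}\geq x>U_P'$, which forces the strict inequality $M_P(x)\geq M_{\pi}(x)+1$. I would prove it by accounting for the packets of generation time $\geq x$ that have arrived up to the current time. Any such packet that is in service under $\pi$ must also be in service under $P$: it cannot have been delivered under $P$, for then $U_P'\geq x$, and it cannot be waiting in $P$'s queue, for then every $P$-server would carry a fresher packet, contradicting $\alpha_{m,P}\leq U_P'<x$. The same two observations apply to the packet of generation time $\gamma_{\pi}$ that $\pi$ is about to begin serving; since that packet sits in $\pi$'s queue it is distinct from $\pi$'s in-service packets, yet it too is in service under $P$. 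This exhibits $M_{\pi}(x)+1$ distinct in-service $P$-packets of generation time $\geq x$, i.e.\ $M_P(x)\geq M_{\pi}(x)+1$, completing the proof. The delicate point is to invoke the ``no fresh packet ever waits'' property of policy $P$ at precisely the threshold $U_P'=\alpha_{l,P}$, which is why the entire case analysis is anchored at that value.
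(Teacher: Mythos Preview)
Your argument is correct and reaches the conclusion, but by a different route than the paper. The paper's proof is two lines: once $U_P'=\alpha_{l,P}\geq\alpha_{l,\pi}=U_{\pi}'$ is established, it simply invokes Lemma~\ref{lem3'}, which says that at any epoch the single inequality $U_P\geq U_{\pi}$ already forces $\alpha_{i,P}\geq\alpha_{i,\pi}$ for every $i$; this is because under $P$ one has the closed form $\alpha_{i,P}=\max\{s_{[i]},U_P\}$ (with $s_{[i]}$ the $i$-th freshest generation time among all arrivals so far) whereas any $\pi$ satisfies $\alpha_{i,\pi}\leq\max\{s_{[i]},U_{\pi}\}$. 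That lemma packages the structural property of policy $P$ once and is reused verbatim in the arrival step (Lemma~\ref{lem4}). Your threshold-counting argument via $N_\rho(x)$ and $M_\rho(x)$ is a direct, self-contained derivation of the same inequality at the post-delivery epoch, trading the paper's modularity for a longer case analysis that never names Lemma~\ref{lem3'} but in effect re-proves its content.

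One small omission: in the case $\gamma_{\pi}\geq x>U_P'$, when you argue that every packet of generation time $\geq x$ present in $\pi$'s system must be in service under $P$, you rule out ``delivered under $P$'' and ``waiting in $P$'s queue'' but not ``dropped by $P$'', which can occur when $B<\infty$. The fix uses the same idea: a packet dropped by $P$ was, at the moment of dropping, no fresher than any in-service packet, so $\alpha_{m,P}\geq x$ held then; and one checks directly that $\alpha_{m,P}(\cdot)$ is non-decreasing under policy $P$ (neither an arrival nor a delivery can lower it, given the LGFS preemption rule and the state convention in \eqref{sys_state_def.}), so $\alpha_{m,P}\geq x$ would still hold now, contradicting $\alpha_{m,P}\leq\alpha_{l,P}=U_P'<x$.
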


\begin{proof}
See Appendix~\ref{Appendix_A}.
\end{proof}

\begin{lemma}\label{lem4}
Suppose that under policy $P$, $\{U_P', \alpha_{1,P}',\ldots,\alpha_{m,P}'\}$ is obtained by adding a packet with generation time $s$ to the system whose state is $\{U_P, \alpha_{1,P},\ldots,\alpha_{m,P}\}$. Further, suppose that under policy $\pi$, $\{U_{\pi}', \alpha_{1,\pi}',\ldots,\alpha_{m,\pi}'\}$ is obtained by adding a packet with generation time $s$ to the system whose state is $\{U_{\pi}, \alpha_{1,\pi},\ldots,\alpha_{m,\pi}\}$. If
\begin{equation}\label{hyp2}
U_P\geq U_{\pi},
\end{equation}
then
\begin{equation}\label{law2}
U_P' \geq U_{\pi}', \alpha_{i,P}' \geq \alpha_{i,\pi}', \quad \forall i=1,\ldots,m.
\end{equation}
\end{lemma}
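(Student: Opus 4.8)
The plan is to argue directly by case analysis on what policy $P$ and policy $\pi$ do with the arriving packet of generation time $s$, exploiting the structural fact that under policy $P$ the servers always hold the $m$ freshest packets available. Since no packet is delivered in this operation, $U_P' = U_P$ and $U_\pi' = U_\pi$, so the hypothesis \eqref{hyp2} immediately gives $U_P' \geq U_\pi'$; the work is entirely in showing the componentwise inequality $\alpha_{i,P}' \geq \alpha_{i,\pi}'$ for all $i$. First I would note that adding a packet with generation time $s$ can only (weakly) increase each ordered coordinate $\alpha_{i,\pi}$: under any policy, the new sorted vector of "generation times of packets in service" dominates the old one coordinatewise, because either the packet is placed in service (displacing at most the currently stalest in-service packet, which can only raise the sorted list or leave it unchanged) or it is queued (no change). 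So it suffices to control how much each side moves.

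The key observation is that under policy $P$, after the arrival the in-service set is exactly the $m$ freshest packets among all packets that have arrived in $[0,t]$. I would split into two cases. Case 1: $s \leq \alpha_{m,P}$ (equivalently $s$ is not among the $m$ freshest under $P$, using $\alpha_{m,P}$ as the smallest in-service generation time, the quantity called $\alpha$ in Algorithm \ref{alg1}). Then policy $P$ queues the packet and $\alpha_{i,P}' = \alpha_{i,P}$ for all $i$. Meanwhile, whatever policy $\pi$ does, its sorted in-service vector can increase, but I claim it cannot exceed $\alpha_{i,P}$: for $i < m$ this is clear since $\alpha_{i,\pi}' \le$ (the $i$-th freshest available packet) $= \alpha_{i,P} = \alpha_{i,P}'$ by the characterization of $P$; and for $i = m$, $\alpha_{m,\pi}'$ is either $\alpha_{m,\pi} \le \alpha_{m,P}$ or bounded by $\max(\alpha_{m,\pi}, s) \le \max(\alpha_{m,P}, s) = \alpha_{m,P} = \alpha_{m,P}'$ since $s \le \alpha_{m,P}$. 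Hence \eqref{law2} holds. Case 2: $s > \alpha_{m,P}$. Then $P$ inserts $s$ into service, the new sorted $P$-vector is obtained from the old one by inserting $s$ and dropping the smallest entry $\alpha_{m,P}$, so $\alpha_{i,P}'$ is the $i$-th largest of $\{s, \alpha_{1,P}, \ldots, \alpha_{m-1,P}\}$. For policy $\pi$, regardless of its decision, $\alpha_{i,\pi}'$ is at most the $i$-th largest of $\{s, \alpha_{1,\pi}, \ldots, \alpha_{m-1,\pi}\}$ (it either queues $s$, giving the old vector, or inserts it, giving exactly this). Since $\{s, \alpha_{1,P}, \ldots, \alpha_{m-1,P}\} \geq \{s, \alpha_{1,\pi}, \ldots, \alpha_{m-1,\pi}\}$ coordinatewise by \eqref{hyp1} (which, note, is not assumed here — see below), the $i$-th order statistics are ordered the same way, giving \eqref{law2}.

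The subtle point — and what I expect to be the main obstacle — is that Lemma \ref{lem4} only assumes \eqref{hyp2}, namely $U_P \geq U_\pi$, and \emph{not} the full coordinatewise hypothesis $\alpha_{i,P} \geq \alpha_{i,\pi}$ used in Lemma \ref{lem3}. This is deliberate and is exactly where Lemma 2* (Lemma \ref{lem3'}) is invoked: from $U_P \geq U_\pi$ alone, Lemma \ref{lem3'} already yields $\alpha_{i,P} \geq \alpha_{i,\pi}$ for all $i$, so I may assume \eqref{hyp1} holds before the arrival and then run the case analysis above. So the real structure of the argument is: apply Lemma \ref{lem3'} to upgrade \eqref{hyp2} to a coordinatewise bound, then do the two-case order-statistics comparison. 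I would make sure to handle the boundary bookkeeping carefully — in particular the convention that idle servers or servers sending stale packets have their $\alpha$-coordinate set to $U_\pi(t)$, so that when $s$ displaces such a "slot" the monotonicity of order statistics still applies and the resresult is consistent with \eqref{sys_state_def.}. Apart from that convention check, every step reduces to the elementary fact that inserting a common value into two coordinatewise-ordered vectors and re-sorting preserves the ordering.
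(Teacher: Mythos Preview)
Your proof is correct but takes a longer route than the paper. The paper's argument is a two-liner: since no delivery occurs, $U_P' = U_P \geq U_\pi = U_\pi'$; then, because Lemma~\ref{lem3'} holds at \emph{any} instant (its proof only uses the common set $S$ of packets that have arrived by that time together with the inequality on $U$), it applies directly to the \emph{post-arrival} state and yields $\alpha_{i,P}' \geq \alpha_{i,\pi}'$ at once. You instead invoke Lemma~\ref{lem3'} on the \emph{pre-arrival} state to upgrade \eqref{hyp2} to a coordinatewise bound, and then track the effect of the arrival by a case split and an order-statistics comparison. This is valid, and it has the modest merit of making explicit what each policy does mechanically with the new packet, but all of that case analysis is subsumed by a single application of Lemma~\ref{lem3'} \emph{after} the arrival: the characterization $\alpha_{i,P}' = \max\{s'_{[i]}, U_P'\}$ (with $s'_{[i]}$ the $i$-th freshest generation time in the augmented arrived set) already encodes exactly how policy $P$ handles the new packet, so there is nothing further to verify.
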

\ifreport 
\begin{proof}
See Appendix~\ref{Appendix_B}.
\end{proof}
\else
\begin{proof}
See our technical report \cite{Ahmed_arXiv}.
\end{proof}
\fi

\begin{proof}[ Proof of Lemma \ref{lem2}]
For any sample path, we have that $U_P (0^-) = U_{\pi}(0^-)$ and $\alpha_{i,P}(0^-) = \alpha_{i,\pi}(0^-)$ for $i=1,\ldots,m$. This, together with Lemma \ref{lem3} and \ref{lem4},  implies that  
\begin{equation}
\begin{split}
[U_P(t)\vert\mathcal{I}] \geq [U_{\pi}(t)\vert\mathcal{I}], [\alpha_{i,P}(t)\vert\mathcal{I}] \geq [\alpha_{i,\pi}(t)\vert\mathcal{I}],\nonumber
\end{split}
\end{equation}
holds for all $t\in[0,\infty)$ and $i=1,\ldots,m$. Hence, \eqref{main_eq} follows which implies \eqref{law9} by Theorem 6.B.30 in \cite{shaked2007stochastic}.
This completes the proof.
\end{proof}

%
%

\begin{proof}[Proof of Theorem \ref{thm1}]
As a result of Lemma \ref{lem2}, we have
\begin{equation*}
\begin{split}
[\{U_{P}(t),  t\in[0,\infty)\}\vert\mathcal{I}]\geq_{\text{st}} [\{U_{\pi}(t), t\in[0,\infty)\}\vert\mathcal{I}],
 \end{split}
\end{equation*}
holds for all work-conserving policies $\pi$, which implies
\begin{equation}
\begin{split}
[\Delta_{P}\vert\mathcal{I}]\leq_{\text{st}}[\Delta_{\pi}\vert\mathcal{I}],
 \end{split}
\end{equation}
holds for all work-conserving policies $\pi$.

For non-work-conserving policies, since the service times are exponentially distributed and \emph{i.i.d.} across time and servers, service idling only increases the waiting time of the packet in the system. Therefore, the age under non-work-conserving policies will be greater. As a result, we have 
\begin{equation}
\begin{split}
[\Delta_{P}\vert\mathcal{I}]\leq_{\text{st}}[\Delta_{\pi}\vert\mathcal{I}], \forall \pi\in\Pi.\nonumber
 \end{split}
\end{equation}

Finally, \eqref{thm1eq2} follows directly from \eqref{thm1eq1} using the properties of stochastic ordering \cite{shaked2007stochastic}. This completes the proof.
\end{proof}

Theorem \ref{thm1} tells us that for arbitrary number $n$, packet generation times $(s_1, s_2, \ldots, s_n)$ and arrival times $(a_{1}, a_{2}, \ldots, a_{n})$, and buffer size $B$, the prmp-LGFS policy can achieve age-optimality within the policy space $\Pi$. In addition, \eqref{thm1eq2} tells us that the prmp-LGFS policy minimizes any non-decreasing age penalty functional $g$, including the time-average age \eqref{functional1}, average peak age \eqref{functional2}, and average age penalty \eqref{functional3}. 

As a result of Theorem \ref{thm1}, we can deduce the following corollary: 

\begin{corollary}\label{cor'}
Suppose that the packet service times are exponentially distributed and \emph{i.i.d.} across time and servers, then for all $\mathcal{I}$, the age performance of the prmp-LGFS policy remains the same for any queue size $B\geq 0$.
\end{corollary}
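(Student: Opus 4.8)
The plan is to bound the age process of prmp-LGFS with an arbitrary buffer $B$ from below by the $B=\infty$ case and from above by the $B=0$ case, and then to show that these two extremes already have the same law. Throughout, write $\Delta_{P,B}$ for the age process of prmp-LGFS with queue size $B$ under a given $\mathcal{I}$.

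First I would record that $\Delta_{P,B}$ is monotone in $B$. If $B'\geq B$, then running prmp-LGFS with queue size $B$ is a causally feasible policy in a system whose queue size is $B'$ (one simply never uses the extra buffer slots), so its age process has the same law as $\Delta_{P,B}$; invoking Theorem \ref{thm1} in the system with buffer $B'$ then gives $\Delta_{P,B'}\leq_{\text{st}}\Delta_{P,B}$. Specializing to $B'=\infty$ with $B$ arbitrary, and also to $B'$ arbitrary with $B=0$ (discarding every packet that cannot be sent immediately is feasible with any buffer), yields $\Delta_{P,\infty}\leq_{\text{st}}\Delta_{P,B}\leq_{\text{st}}\Delta_{P,0}$ for every $B\geq 0$. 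Hence it suffices to prove $\Delta_{P,\infty}=_{\text{st}}\Delta_{P,0}$, since then the sandwich pins $\Delta_{P,B}$ to the same law for all $B$.

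To establish $\Delta_{P,\infty}=_{\text{st}}\Delta_{P,0}$ I would couple the two systems so that they see the same arrivals and so that, whenever a given fresh packet is in service in both systems, its (memoryless, exponential) residual service time agrees in the two; then I would argue by induction over arrival and departure events that the state vector $\mathbf{V}(t)=(U(t),\alpha_{1}(t),\ldots,\alpha_{m}(t))$ is \emph{pathwise identical} in the two systems. The backbone is the invariant that, under prmp-LGFS, every packet in service is at least as fresh as every queued packet; this is checked by inspecting each branch of Algorithm \ref{alg1}. Its crucial consequence is that when a server frees up and picks the freshest queued packet, that packet's generation time is at most the generation time of the packet just delivered, hence at most the updated $U(t)$; thus a buffered packet is already \emph{stale} the instant it can re-enter service, never changes $U$, and (by the padding convention $\alpha_{i}(t)=U(t)$ for stale or idle servers) never appears in $\mathbf{V}$. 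Consequently the servers busy with buffered packets in the $B=\infty$ system play, as far as $\mathbf{V}$ is concerned, exactly the role idle servers play in the $B=0$ system, and a newly arriving fresh packet is placed or preempts identically in both; so $\mathbf{V}$, and therefore $U(t)$ and $\Delta(t)=t-U(t)$, evolve identically. Combined with the sandwich above, and with the fact that mutual stochastic domination is equality in law, this gives the corollary.

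The step I expect to be the main obstacle is precisely this last one: the two coupled systems genuinely have different busy/idle patterns — the $B=\infty$ system keeps servers occupied with stale buffered packets — so one has to argue carefully that this discrepancy is invisible to the age process, which is exactly what the freshness invariant and the ``stale on re-service'' observation buy us. Everything else is routine case-checking over the finitely many event types of Algorithm \ref{alg1}.
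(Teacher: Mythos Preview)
Your proof is correct, and at its heart it is the same argument the paper gives: under prmp-LGFS the buffer only ever holds packets that are already stale relative to the current $U(t)$, so whether those packets are stored, dropped, or later pushed back into a server has no effect on the age process. The paper states this in one sentence (``the queue is used to store the preempted packets and outdated arrived packets; since the delivery of these packets doesn't affect the age process\ldots''), whereas you unpack it into an explicit invariant (in-service packets are at least as fresh as queued packets) and a coupling that tracks $\mathbf{V}(t)$ through each event of Algorithm~\ref{alg1}. That extra care is a genuine contribution over the paper's terse proof: it makes precise why a packet pulled from the buffer is stale \emph{the instant it re-enters service} and why the padding convention $\alpha_i=U$ then renders buffered-vs-idle servers indistinguishable in $\mathbf{V}$.

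Where you do diverge from the paper is in the packaging. The sandwich $\Delta_{P,\infty}\le_{\text{st}}\Delta_{P,B}\le_{\text{st}}\Delta_{P,0}$ via Theorem~\ref{thm1} is valid but superfluous: your coupling argument for $B=\infty$ versus $B=0$ works verbatim for any pair of buffer sizes, since the invariant and the ``stale on re-service'' observation never use the value of $B$. So you could drop the sandwich entirely and simply couple $\Delta_{P,B}$ to $\Delta_{P,0}$ directly, obtaining pathwise equality of $\mathbf{V}(t)$ for every $B$ in one stroke. That is closer to what the paper does, and it avoids invoking Theorem~\ref{thm1} (whose proof already depends on the same state-vector machinery) for a fact that follows from the more elementary invariant alone.
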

\ifreport 
\begin{proof}
From the definition of policy prmp-LGFS, its queue is used to store the preempted packets and outdated arrived packets. Since the delivery of these packets doesn't affect the age process of prmp-LGFS policy,  the age performance of the prmp-LGFS policy is invariant for any queue size $B\geq 0$. This completes the proof.
\end{proof}
\else
We can notice from the proofs of Lemma \ref{lem3} and Lemma \ref{lem4} that, the system states evolution is invariant for any queue size $B\geq 0$. Hence, the age performance of the prmp-LGFS policy is invariant for any queue size $B\geq 0$.
\fi

Finally, the delay and throughput optimality of the prmp-LGFS policy is stated as follows:
\begin{theorem}\label{thm2}
Suppose that the packet service times are \emph{i.i.d.} exponentially distributed across time and servers, then for all $\mathcal{I}$ such that $B = \infty$, the prmp-LGFS policy is throughput-optimal and mean-delay-optimal  among all policies in $\Pi$.
\end{theorem}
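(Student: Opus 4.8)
The plan is to reduce both optimality claims to a single stochastic comparison of the number-in-system process and then exploit the memoryless property of the exponential service time. For a policy $\pi$, let $A(t)$ be the number of packets that have arrived by time $t$, let $N_\pi(t)$ be the number of packets delivered by time $t$, and let $Q_\pi(t)$ be the number of indices $i$ with $a_i\le t< c_{i,\pi}$, i.e., the number of packets held in the queue and the servers at time $t$; because $B=\infty$, no packet is ever dropped under any $\pi\in\Pi$, so $Q_\pi(t)=A(t)-N_\pi(t)$. The throughput (long-run delivery rate, e.g. $\liminf_{t} N_\pi(t)/t$) is a non-decreasing functional of the sample path $\{N_\pi(t),t\ge 0\}$, while the total in-system delay obeys $\sum_i (c_{i,\pi}-a_i)=\int_0^\infty Q_\pi(t)\,dt$ (for $n=\infty$ the time-average delay is the corresponding time-average of $Q_\pi$ by Little's law), which is a non-decreasing functional of $\{Q_\pi(t),t\ge 0\}$. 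Hence it suffices to establish the process-level ordering
\begin{equation*}
[\{Q_P(t),t\ge 0\}\mid\mathcal I]\le_{\text{st}}[\{Q_\pi(t),t\ge 0\}\mid\mathcal I],\qquad\forall\,\pi\in\Pi,
\end{equation*}
equivalently $[\{N_P(t)\}\mid\mathcal I]\ge_{\text{st}}[\{N_\pi(t)\}\mid\mathcal I]$; taking the relevant non-decreasing functionals then yields both throughput- and mean-delay-optimality.

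The second step treats work-conserving policies. Policy $P$ is work-conserving by construction (Algorithm~\ref{alg1} never leaves a server idle while a packet waits, and a preemption merely swaps which packet a busy server carries). For any work-conserving $\pi$ with $B=\infty$, the number of busy servers at time $t$ is exactly $\min(m,Q_\pi(t))$, and by the memoryless property of the exponential distribution the residual service time of every busy server is $\text{Exp}(\mu)$, irrespective of whether the packet in service was just started, resumed, or restarted after a preemption. Therefore, conditioned on $\mathcal I$, the process $\{Q_\pi(t),t\ge 0\}$ is a pure-jump Markov process whose up-jumps occur at the fixed arrival epochs $a_1,a_2,\ldots$ and whose down-jump rate in state $q$ is $\mu\min(m,q)$ — a description that does not depend on the particular work-conserving policy. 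Consequently all work-conserving policies, and in particular $P$, induce the same conditional law for $\{Q_\pi(t)\}$.

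The last step handles non-work-conserving policies, exactly in the spirit of the non-work-conserving argument in the proof of Theorem~\ref{thm1}: idling a server while a packet waits can only delay deliveries. Formally, given a non-work-conserving $\pi$, I would couple service completions via independent rate-$\mu$ Poisson clocks attached to the servers (a valid coupling by memorylessness) and construct a causal work-conserving policy $\pi'$ that mimics $\pi$ except that it immediately assigns a waiting packet to any server that $\pi$ leaves idle; a forward induction on the event epochs then shows $Q_{\pi'}(t)\le Q_\pi(t)$ for all $t$ on every sample path. Combining this with the previous step gives $[\{Q_P(t)\}\mid\mathcal I]=_{\text{st}}[\{Q_{\pi'}(t)\}\mid\mathcal I]\le_{\text{st}}[\{Q_\pi(t)\}\mid\mathcal I]$, which completes the reduction and hence the theorem. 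I expect this non-work-conserving step to be the main obstacle: one must ensure the coupling of service times stays consistent even though $\pi$ and $\pi'$ have different sets of packets in service at a given time (this is precisely where memorylessness is invoked) and that $\pi'$ remains causal; by contrast, once one notices that memorylessness collapses the queue-length dynamics to a policy-independent birth–death process, the work-conserving case is immediate.
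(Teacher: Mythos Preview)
Your proposal is correct and follows the same overall structure as the paper's proof: first establish that all work-conserving policies yield the \emph{same} law for the number-in-system process, then argue that non-work-conserving policies can only do worse. The difference is one of packaging. The paper proves the work-conserving equality via an explicit sample-path coupling (matching delivery epochs) together with two small forward-induction lemmas showing that the state pair $(N_\pi,\gamma_\pi)$ is preserved at every arrival and every departure. You instead observe directly that, by memorylessness, the queue-length process under any work-conserving policy has up-jumps at the fixed arrival epochs and state-$q$ down-rate $\mu\min(m,q)$, so its law is policy-independent without writing down the coupling. This is a cleaner, more conceptual way to express exactly the same fact; it buys brevity but does not add generality. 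For the non-work-conserving step you actually supply more detail than the paper (which simply asserts that idling can only hurt), and you correctly identify the one place requiring care: keeping the Poisson-clock coupling consistent when $\pi$ and the work-conserving surrogate $\pi'$ have different packets in service.
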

\ifreport
In particular, any work-conserving policy is throughput-optimal and mean-delay-optimal. The proof details are provided in Appendix~\ref{Appendix_C}.
\else
\begin{proof}
See our technical report \cite{Ahmed_arXiv}.
\end{proof}
In particular, any work-conserving policy is throughput optimal and mean-delay-optimal
\fi

\section{Numerical Results}\label{Simulations}
We present some numerical results to illustrate the age performance of different  policies. The packet service times are exponentially distributed with mean $1/\mu =1$, which is \emph{i.i.d.} across time and servers. The inter-generation times are \emph{i.i.d.} Erlang-2 distribution with mean $1/\lambda$. The number of servers is $m$. Hence, the traffic intensity is $\rho=\lambda/m\mu$. The queue size is $B$, which is a non-negative integer.

 
\begin{figure}[t]
\centering
\includegraphics[scale=0.35]{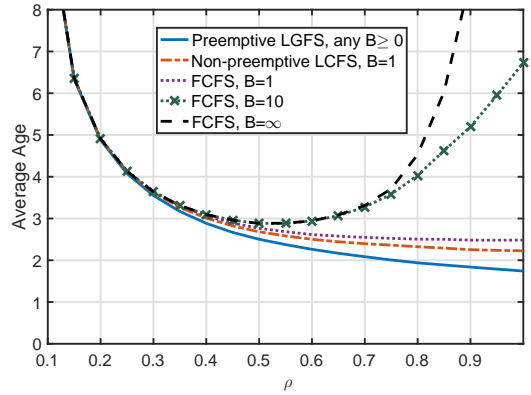}
\caption{Average age versus traffic intensity $\rho$ for an update system with $m=1$ server and queue size $B$.}
\label{avg_age1}
\end{figure}
Figure \ref{avg_age1} illustrates the time-average age versus $\rho$ for an information-update system with $m=1$ server. The time difference between packet generation and arrival ($a_i-s_i$) is zero, i.e., the update packets arrive in the same order of their generation times. One can observe that the preemptive LGFS policy achieves a better (smaller) age than the FCFS policy  analyzed in \cite{KaulYatesGruteser-Infocom2012}, and the non-preemptive LCFS policy with queue size $B=1$ \cite{2012CISS-KaulYatesGruteser} which was also named ``M/M/1/2*'' in \cite{CostaCodreanuEphremides2014ISIT}. 
Note that  in these prior studies, the time-average age was  characterized only for the special case of Poisson arrival process.   Moreover, with ordered arrived packets at the server, the LGFS policy and LCFS policy have the same age performance.


Figure \ref{avg_age2} plots  the time-average age versus $\rho$ for an information-update system with $m=5$ servers. The time difference between packet generation and arrival, i.e., $a_i-s_i$, is modeled to be either  1 or 100, with equal probability.  We found that the age performance of each policy is better than that in Fig. \ref{avg_age1}, because of the diversity provided by five servers. In addition, the preemptive LGFS policy achieves the best age performance among all plotted policies. It is important to emphasize that the age performance of the preemptive LGFS policy remains the same for any queue size $B\geq0$. However, the age performance of the non-preemptive LGFS policy and FCFS policy varies with the queue size $B$ when there are multiple servers. We also observe that the average age in case of FCFS policy with $B=\infty$ blows up when the traffic intensity is high. This is due to the increased congestion in the network which leads to a delivery of stale packets. Moreover, in case of FCFS policy with $B=10$, the average age is high but bounded at high traffic intensity, since the fresh packet has a better opportunity to be delivered in a relatively short period compared with FCFS policy with $B=\infty$. These numerical results validate Theorem \ref{thm1}.


\begin{figure}[t]
\centering
\includegraphics[scale=0.35]{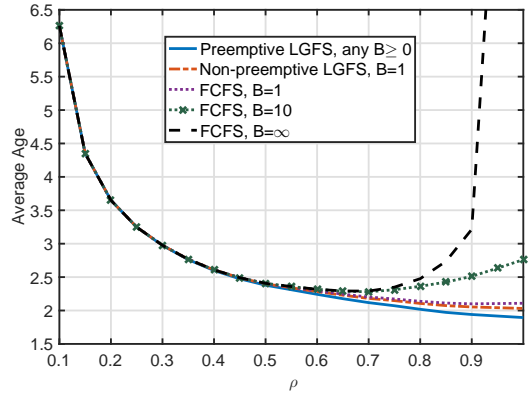}
\caption{Average age versus traffic intensity $\rho$ for an update system with $m=5$ servers and queue size $B$.}
\label{avg_age2}
\end{figure}

\section{Conclusion}\label{Concl}
In this paper, we considered an information-update system, in which update packets are forwarded to a destination through multiple network servers. It was showed that, if the packet service times are \emph{i.i.d.} exponentially distributed, then for any given arrival process and queue size, the preemptive LGFS policy simultaneously optimizes the data freshness, throughput, and delay performance among all causally feasible policies. We will extend these results to more general system settings with general service time distributions.   
\appendices

\section{Proof of Lemma \ref{lem3'}}\label{Appendix_A'}

Let $S$ denote the set of packets that have arrived to the system at the considered time epoch. It is important to note that the set $S$ is invariant of the scheduling policy. We use $s_{[i]}$ to denote the $i$-th largest generation time of the packets in $S$. From the definition of the system state and policy $P$, we have
\begin{equation}\label{lnew1}
 \alpha_{i,P}=\max\{s_{[i]},U_P\}, \quad \forall i=1,\ldots,m.
\end{equation}
Since policy $\pi$ is arbitrary policy, the $i$-th freshest packet being processed by the servers under policy $\pi$ is either the $i$-th freshest packet in the set $S$ (the best choice that can be done) or older one. Hence, we have
\begin{equation}\label{lnew2}
 \alpha_{i,\pi}\leq\max\{s_{[i]},U_\pi\}, \quad \forall i=1,\ldots,m,
\end{equation}
where the maximization here follows from the definition of the system state. Since the set $S$ is invariant of the scheduling policy and $U_P \geq U_{\pi}$, this with \eqref{lnew1} and \eqref{lnew2} imply
\begin{equation}
 \alpha_{i,P} \geq \alpha_{i,\pi}, \quad \forall i=1,\ldots,m,
\end{equation}
which completes the proof.

\section{Proof of Lemma \ref{lem3}}\label{Appendix_A}
Since the packet with generation time $\alpha_{l,P}$ is delivered under policy $P$, the packet with generation time $\alpha_{l,\pi}$ is delivered under policy $\pi$, and $\alpha_{l,P}\geq \alpha_{l,\pi}$, we get
\begin{equation}\label{lem3_1}
\begin{split}
 U_P'= \alpha_{l,P}\geq \alpha_{l,\pi}=U_{\pi}'.
 \end{split}
\end{equation}
 From $ U_P'\geq U_{\pi}'$ and using Lemma \ref{lem3'}, we obtain
\begin{equation}\label{lem3_6}
\begin{split}
&\alpha_{i,P}'\geq \alpha_{i,\pi}', \quad  i=1,\ldots,m.
 \end{split}
\end{equation}
Hence, (\ref{law6}) holds for any queue size $B\geq 0$, which completes the proof.

\ifreport 

\section{Proof of Lemma \ref{lem4}}\label{Appendix_B}
Since there is no packet delivery, we have
\begin{equation}\label{law4}
\begin{split}
 U_P'=U_P\geq U_{\pi}=U_{\pi}',
 \end{split}
\end{equation}

From $U_P'\geq U_{\pi}'$ and using Lemma \ref{lem3'}, we obtain
\begin{equation}\label{lem3_6}
\begin{split}
&\alpha_{i,P}'\geq \alpha_{i,\pi}', \quad  i=1,\ldots,m.
 \end{split}
\end{equation} 
Hence, (\ref{law2}) holds for any queue size $B\geq 0$, which completes the proof.

\section{Proof of Theorem \ref{thm2}}\label{Appendix_C}
We follow the same proof technique of Theorem \ref{thm1}. We start by comparing policy $P$ (prmp-LGFS policy) with an arbitrary work-conserving policy $\pi$. For this, we need to define the system state of any policy $\pi$:

\textbf{Definition 8.}  At any time $t$, the system state of policy $\pi$ is specified by  $H_{\pi}(t)=( N_\pi(t), \gamma_\pi(t))$, where $N_\pi(t)$ is the total number of packets in the system at time $t$. Define $\gamma_\pi(t)$ as the total number of packets that are delivered to the destination at time $t$. Let $\{H_{\pi}(t), t\in[0,\infty)\}$ be the state process of policy $\pi$, which is assumed to be right-continuous.

To prove Theorem \ref{thm2}, we will need the following lemma.
\begin{lemma}\label{lem5}
For any work-conserving policy $\pi$, if $H_{P}(0^-)=H_{\pi}(0^-)$ and $B=\infty$, then $[\{{H}_{P}(t),  t\in[0,\infty)\}\vert\mathcal{I}]$ and $[\{H_{\pi}(t), t\in[0,\infty)\}\vert\mathcal{I}]$ are of the same distribution.
\end{lemma}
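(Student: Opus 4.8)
The plan is to prove Lemma~\ref{lem5} by the same coupling-and-forward-induction argument used for Lemma~\ref{lem2}, but now tracking only the occupancy/departure pair $H_\pi(t)=(N_\pi(t),\gamma_\pi(t))$ rather than the generation-time vector $\mathbf V_\pi(t)$. First I would couple the service completions of policy $P$ and policy $\pi$ in the memoryless fashion: because the service times are i.i.d.\ exponential, at any epoch at which $k$ servers are busy the time to the next departure is $\mathrm{Exp}(k\mu)$ regardless of \emph{which} packets occupy those servers, so I can couple the two systems so that a server of $P$ completes a job at time $t$ exactly when a server of $\pi$ completes a job at time $t$, \emph{provided the two systems have the same number of busy servers at $t$}. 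This is the crucial point that makes the equality of distributions, not just a stochastic inequality, go through.

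The key structural observation is that, with $B=\infty$, every work-conserving policy keeps $\min\{N_\pi(t),m\}$ servers busy at all times, and the arrival process $(a_1,\dots,a_n)$ is the same for both systems. Hence I would argue by induction over the ordered sequence of events (arrivals and the coupled departures): if $N_P(t^-)=N_\pi(t^-)$ then the number of busy servers agrees, the coupled departure epochs agree, the common arrival epochs agree, and each such event changes $N_P$ and $N_\pi$ by the same amount ($+1$ at an arrival, $-1$ at a departure), so $N_P(t)=N_\pi(t)$ is maintained; likewise $\gamma_P(t)=\gamma_\pi(t)$ since departures are simultaneous. With $H_P(0^-)=H_\pi(0^-)$ as the induction base, this gives $\mathbb P[H_P(t)=H_\pi(t),\,t\in[0,\infty)\mid\mathcal I]=1$ on the coupled probability space, and therefore, by the standard argument (Theorem~6.B.30 in \cite{shaked2007stochastic}), the two state processes have the same distribution, which is Lemma~\ref{lem5}.

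From the lemma the theorem follows quickly. Throughput is a functional of $\gamma_\pi$ (the long-run departure rate) and mean delay is, via Little's law, a functional of the time-average of $N_\pi$; since $\{H_P(t)\}$ stochastically dominates (indeed equals in law) $\{H_\pi(t)\}$ for every work-conserving $\pi$, and since any non-work-conserving policy can only idle a server while packets wait --- which, by the memorylessness of the exponential service, only delays departures and inflates $N_\pi(t)$ pathwise --- policy $P$ is throughput-optimal and mean-delay-optimal over all of $\Pi$, and in fact every work-conserving policy is simultaneously optimal. I would also remark that, because $B=\infty$, no packet is ever dropped under any policy, so throughput-optimality here simply means that the departure process achieves the maximal long-run rate.

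The main obstacle I anticipate is making the coupling rigorous at event epochs where the ``number of busy servers'' could momentarily differ between the two systems --- e.g.\ immediately after a simultaneous departure when one system has a packet in queue to refill the server and (hypothetically) the other does not. The induction hypothesis $N_P(t^-)=N_\pi(t^-)$ is exactly what rules this out, but one has to check the right-continuity conventions and handle the (probability-zero) possibility of simultaneous arrival/departure events carefully; this is routine but is where the argument must be stated precisely rather than waved through. A secondary point worth spelling out is that the coupling of exponential clocks has to be set up so that the residual service time of a server that is carried over an event is re-sampled consistently in both systems, which is legitimate precisely because of the memoryless property already invoked in the proof of Lemma~\ref{lem2}.
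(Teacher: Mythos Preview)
Your proposal is correct and follows essentially the same approach as the paper: construct coupled copies of the two state processes, synchronize their departure epochs using the memoryless property, and then run a forward induction over the common sequence of arrival and departure events (the paper packages the induction step into two short lemmas, one for a departure and one for an arrival, each showing that $N_P=N_\pi,\ \gamma_P=\gamma_\pi$ is preserved). Your write-up is in fact more careful than the paper's about \emph{why} the coupling is legitimate---namely that the induction invariant $N_P(t^-)=N_\pi(t^-)$ forces both work-conserving systems to have $\min\{N(t^-),m\}$ busy servers and hence the same $\mathrm{Exp}(k\mu)$ clock to the next departure---which the paper simply asserts; the concerns you flag about right-continuity and ties are the right ones to dispose of, and they do not affect the argument.
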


 Suppose that $\{\widetilde{H}_P(t), t\in[0,\infty)\}$ and $\{\widetilde{H}_{\pi}(t), t\in[0,\infty)\}$ are stochastic processes having the same stochastic laws as $\{H_P(t), t\in[0,\infty)\}$ and $\{H_{\pi}(t), t\in[0,\infty)\}$. Now, we couple  the packet delivery times during the evolution of $\widetilde{H}_P(t)$ to be identical with the packet delivery times during the evolution of $\widetilde{H}_{\pi}(t)$.
 
To ease the notational burden, we will omit the tildes henceforth on the coupled versions and just use $\{H_P(t)\}$ and $\{H_{\pi}(t)\}$. The following two lemmas are needed to prove Lemma \ref{lem5}:

\begin{lemma}\label{lem6}
 Suppose that under policy $P$, $\{N_P', \gamma_P'\}$ is obtained by delivering a packet to the destination in the system whose state is $\{N_P, \gamma_P\}$. Further, suppose that under policy $\pi$, $\{N_\pi', \gamma_\pi'\}$ is obtained  by delivering a packet to the destination in the system whose state is $\{N_\pi, \gamma_\pi\}$. If
\begin{equation*}
N_P = N_{\pi},  \gamma_{P} = \gamma_{\pi},
\end{equation*}
then
\begin{equation}\label{law6_2}
N_P' = N_{\pi}',  \gamma_{P}' = \gamma_{\pi}'.
\end{equation}
\end{lemma}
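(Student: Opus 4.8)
The plan is to observe that, under the coupling introduced for Lemma~\ref{lem5}, a packet delivery is an event whose effect on the state $(N,\gamma)$ depends neither on the scheduling policy nor on which particular packet leaves. First I would check that the delivery under $P$ is actually well defined: the delivery times of $\{H_P(t)\}$ and $\{H_\pi(t)\}$ are coupled to coincide (legitimate because the service times are i.i.d.\ exponential, hence memoryless), and both policies see the same arrival data $\{(s_i,a_i)\}$; so whenever a delivery occurs under $\pi$ it occurs simultaneously under $P$. Since $\pi$ delivers a packet, $N_\pi\ge 1$, and the hypothesis $N_P=N_\pi$ gives $N_P\ge 1$, so policy $P$ indeed has a packet in service to deliver.

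Next I would write down the one-step update rule. With $B=\infty$ no packet is ever dropped, so the occupancy $N$ can change only through an arrival (it increases by one) or a departure (it decreases by one), while the cumulative departure counter $\gamma$ increases by one at each departure and is otherwise unchanged. Conditioning on a departure therefore gives $N'=N-1$ and $\gamma'=\gamma+1$, and this holds verbatim for both $P$ and $\pi$. Combining with the hypotheses $N_P=N_\pi$ and $\gamma_P=\gamma_\pi$ yields
\begin{equation*}
N_P' = N_P-1 = N_\pi-1 = N_\pi', \qquad \gamma_P' = \gamma_P+1 = \gamma_\pi+1 = \gamma_\pi',
\end{equation*}
which is exactly \eqref{law6_2}.

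There is no substantive obstacle here: Lemma~\ref{lem6} is a bookkeeping step, and the only point that deserves an explicit sentence is that the "delivery" transition is policy-independent, which is precisely why coupling the delivery \emph{times} suffices to keep the scalar occupancy $N$ and the departure counter $\gamma$ in lockstep regardless of how the two policies route packets to servers. Together with the companion lemma for arrivals and a forward induction over the sequence of arrival/departure events (mirroring the proof of Lemma~\ref{lem2}), this gives Lemma~\ref{lem5}; Theorem~\ref{thm2} then follows because equality in distribution of $\{H_P(t)\}$ and $\{H_\pi(t)\}$ for every work-conserving $\pi$ shows all work-conserving policies clear the backlog identically, hence attain the same (optimal) throughput and mean delay, and non-work-conserving policies can only do worse by idling.
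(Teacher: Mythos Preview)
Your proposal is correct and follows essentially the same approach as the paper: the paper's proof simply observes that a delivery decrements $N$ by one and increments $\gamma$ by one under either policy, so $N_P'=N_P-1=N_\pi-1=N_\pi'$ and $\gamma_P'=\gamma_P+1=\gamma_\pi+1=\gamma_\pi'$. Your additional remarks about well-definedness of the coupled delivery and the surrounding induction are accurate but go beyond what the paper records for this bookkeeping lemma.
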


\begin{proof}
Since there is a packet delivery, we have
\begin{equation*}
\begin{split}
&N_P'=N_P-1=N_\pi-1=N_\pi',\\
&\gamma_P'=\gamma_P+1=\gamma_\pi+1=\gamma_\pi'.
\end{split}
\end{equation*}
Hence, (\ref{law6_2}) holds, which complete the proof.
\end{proof}

\begin{lemma}\label{lem7}
Suppose that under policy $P$, $\{N_P', \gamma_P'\}$ is obtained by adding a new packet to the system whose state is $\{N_P, \gamma_P\}$. Further, suppose that under policy $\pi$, $\{N_\pi', \gamma_\pi'\}$ is obtained by adding a new packet to the system whose state is $\{N_\pi, \gamma_\pi\}$. If
\begin{equation*}
N_P = N_{\pi},  \gamma_{P} = \gamma_{\pi},
\end{equation*}
then
\begin{equation}\label{law7_2}
N_P' = N_{\pi}',  \gamma_{P}' = \gamma_{\pi}'.
\end{equation}
\end{lemma}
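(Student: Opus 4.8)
The plan is to mirror the argument just used for Lemma~\ref{lem6}, but now for a packet \emph{arrival} rather than a packet departure. The key observation is that both components of the state $(N,\gamma)$ respond to an arrival in a way that is completely determined by the arrival event itself and is independent of any scheduling decision, so the two policies must update their states in exactly the same way.

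First I would invoke the infinite-buffer assumption $B=\infty$: since the buffer never overflows, an arriving packet is never dropped, and adding a new packet to a system in state $\{N,\gamma\}$ always yields $N'=N+1$ regardless of the policy. Applying this under both $P$ and $\pi$ and using the hypothesis $N_P=N_\pi$ gives $N_P'=N_P+1=N_\pi+1=N_\pi'$. Next, an arrival delivers nothing to the destination, so the cumulative number of delivered packets is unchanged, $\gamma'=\gamma$, under either policy; combined with $\gamma_P=\gamma_\pi$ this yields $\gamma_P'=\gamma_P=\gamma_\pi=\gamma_\pi'$, which is \eqref{law7_2}.

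There is essentially no obstacle; the only point that genuinely must be used is $B=\infty$, which is what makes the update $N\mapsto N+1$ policy-independent (with a finite buffer, different policies could discard different packets, or none, and the state components could diverge). With Lemmas~\ref{lem6} and~\ref{lem7} in hand, the remainder of the argument is the standard coupling-plus-forward-induction scheme: starting from $H_P(0^-)=H_\pi(0^-)$ and processing the events (arrivals and the coupled departures) in chronological order, each event preserves the identity $H_P=H_\pi$ by one of the two lemmas, so $\mathbb{P}[H_P(t)=H_\pi(t),\ t\in[0,\infty)\mid\mathcal{I}]=1$, proving Lemma~\ref{lem5}. Theorem~\ref{thm2} then follows because $N_\pi(t)$ and the delivery counting process $\gamma_\pi(t)$ determine the throughput and (via Little's law) the mean delay, and these are thus identical across all work-conserving policies; a non-work-conserving policy can only keep packets in the system longer, so it cannot do better.
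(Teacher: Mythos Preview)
Your proof is correct and matches the paper's argument essentially line for line: invoke $B=\infty$ so no packet is dropped, giving $N_P'=N_P+1=N_\pi+1=N_\pi'$, and note that an arrival causes no delivery, giving $\gamma_P'=\gamma_P=\gamma_\pi=\gamma_\pi'$. The additional outline of how Lemmas~\ref{lem6} and~\ref{lem7} feed into Lemma~\ref{lem5} and Theorem~\ref{thm2} is also accurate, though strictly speaking beyond what was asked.
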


\begin{proof}
Because $B=\infty$, no packet is dropped in policy $P$ and policy $\pi$.
Since there is a new added packet to the system, we have
\begin{equation*}
\begin{split}
N_P'=N_P+1=N_\pi+1=N_\pi'.
\end{split}
\end{equation*}
Also, there is no packet delivery, hence 
\begin{equation*}
\begin{split}
\gamma_P'=\gamma_P=\gamma_\pi=\gamma_\pi'.
\end{split}
\end{equation*}
Thus, (\ref{law7_2}) holds, which complete the proof.
\end{proof}

\begin{proof}[ Proof of Lemma \ref{lem5}] 
For any sample path, we have that $N_P (0^-) = N_{\pi}(0^-)$ and $\gamma_{P}(0^-) = \gamma_{\pi}(0^-)$. This, together with Lemma \ref{lem6} and \ref{lem7}, implies that
\begin{equation}
\begin{split}
[N_P(t)\vert\mathcal{I}] = [N_{\pi}(t)\vert\mathcal{I}], [\gamma_{P}(t)\vert\mathcal{I}] = [\gamma_{\pi}(t)\vert\mathcal{I}], \nonumber
\end{split}
\end{equation}
holds for all $t\in[0,\infty)$. This implies that $[\{{H}_P(t),  t\in[0,\infty)\}\vert\mathcal{I}]$ and $[\{H_{\pi}(t), t\in[0,\infty)\}\vert\mathcal{I}]$ are of the same distribution, which completes the proof.
%
%
\end{proof}

\begin{proof}[Proof of Theorem \ref{thm2}]
As a result of Lemma \ref{lem5}, $[\{\gamma_{P}(t),  t\in[0,\infty)\}\vert\mathcal{I}]$ and $[\{\gamma_\pi(t), t\in[0,\infty)\}\vert\mathcal{I}]$ are of the same distribution for any work-conserving policy $\pi$.
This implies that all work conserving policies have the same throughput performance. Also, from Lemma \ref{lem5}, we have that $[\{N_{P}(t),  t\in[0,\infty)\}\vert\mathcal{I}]$ and $[\{N_\pi(t), t\in[0,\infty)\}\vert\mathcal{I}]$ are of the same distribution for any work-conserving policy $\pi$.
Hence, all work-conserving policies have the same mean-delay performance.

Finally, since the service times are \emph{i.i.d.} across time and servers, service idling only increases the waiting time of the packet in the system. Therefore, the throughput and mean-delay performance under non-work-conserving policies will be worse. As a result, the preemptive LGFS policy  is throughput-optimal and mean-delay-optimal  among all policies in $\Pi$ (indeed, any work-conserving policy with infinite queue size $B=\infty$ is throughput-optimal and mean-delay-optimal).
\end{proof}
\fi
\bibliographystyle{IEEEtran}
\bibliography{MyLib}
\end{document}